\newtheorem{theorem}{Theorem}[section]
\newtheorem{proposition}[theorem]{Proposition}
\newtheorem{corollary}[theorem]{Corollary}
\theoremstyle{remark}
\newtheorem{remark}[theorem]{Remark}
\numberwithin{equation}{section}
\begin{document}

\title[infinite $q$-boson system]
{Diagonalization of the infinite $q$-boson system}

\author{J.F.  van Diejen}

\author{E. Emsiz}

\address{
Facultad de Matem\'aticas, Pontificia Universidad Cat\'olica de Chile,
Casilla 306, Correo 22, Santiago, Chile}
\email{diejen@mat.puc.cl, eemsiz@mat.puc.cl}


\thanks{Work was supported in part by the {\em Fondo Nacional de Desarrollo
Cient\'{\i}fico y Tecnol\'ogico (FONDECYT)} Grants \# 1130226 and  \# 11100315,
and by the {\em Anillo ACT56 `Reticulados y Simetr\'{\i}as'}
financed by the  {\em Comisi\'on Nacional de Investigaci\'on
Cient\'{\i}fica y Tecnol\'ogica (CONICYT)}}

\date{March 2013}

\begin{abstract}
We present a hierarchy of commuting operators in Fock space containing the $q$-boson Hamiltonian on $\mathbb{Z}$ and show that the operators in question are simultaneously diagonalized by Hall-Littlewood functions. As an application, the $n$-particle scattering operator is computed.
\end{abstract}

\maketitle

\section{Introduction}\label{sec1}
The $q$-boson model constitutes a one-dimensional exactly solvable particle system in Fock space \cite{bog-ize-kit:correlation} based on the $q$-oscillator algebra \cite[\text{Ch.}~5]{kli-smu:quantum}.
In the case of periodic boundary conditions (i.e. with particles hopping on the finite lattice $\mathbb{Z}_m$), the integrability, the spectrum, and the eigenfunctions of the Hamiltonian were analyzed by means of the algebraic Bethe Ansatz method \cite{bog-ize-kit:correlation}. Remarkably, these eigenfunctions turn out to be Hall-Littlewood functions \cite{tsi:quantum,kor:cylindric} (cf. also \cite{jin:vertex} for an alternative construction of Hall-Littlewood functions in Fock space based on deformed vertex operator algebras, with applications in the study of
KP $\tau$-functions arising from generating functions of weighted plane partitions
\cite{fod-whe:hall-littlewood}).
With the aid of explicit expressions for the commuting quantum integrals arising from an infinite-dimensional solution
of the Yang-Baxter equation, it was very recently demonstrated \cite{kor:cylindric} that the eigenvalue problem for the $q$-boson system on $\mathbb{Z}_m$ is in fact equivalent to that of an integrable discretization \cite{die:diagonalization} of the celebrated delta Bose gas on the circle \cite{lie-lin:exact}.

The present work addresses the spectral problem and the integrability of the $q$-boson system on the {\em infinite} lattice $\mathbb{Z}$. Specifically, we demonstrate that the eigenfunctions of this infinite $q$-boson system are again given by Hall-Littlewood functions and provide explicit formulas for a complete hierarchy of operators commuting with the Hamiltonian; these formulas are natural infinite-dimensional analogues of the above-mentioned expressions in \cite{kor:cylindric} for the finite $q$-boson system on $\mathbb{Z}_m$.
Finally, the $n$-particle scattering operator is computed as an application of Ruijsenaars' general scattering results in \cite{rui:factorized}.

\section{The infinite $q$-boson system}
Given $n\geq 0$ integral, let
$\mathcal{F}(\Lambda_n)$  be the space of complex functions  $f:\Lambda_n\to\mathbb{C}$ on the fundamental domain
\begin{equation}\label{dominant}
\Lambda_n:=\{(\lambda_1,\dots, \lambda_n)\in\mathbb{Z}^n \mid \lambda_1\geq \lambda_2 \geq \cdots \geq \lambda_n\}
\end{equation}
of the integral lattice
$\mathbb{Z}^n$ modulo the action of the permutation group $S_n$,
where $\Lambda_0:=\{ 0\}$ and $\mathcal{F}(\Lambda_0):=\mathbb{C}$ by convention.
We will refer to the infinite direct sum
\begin{equation}\label{AFock}
\mathcal{F}:=\bigoplus_{n\geq 0} \mathcal{F}(\Lambda_n)
\end{equation}
built of all {\em finite} linear combinations of  (arbitrary) functions $f_n\in\mathcal{F}(\Lambda_n)$, $n=0,1,2,\ldots $ as the algebraic Fock space.

For $\lambda\in \Lambda_n$ and $l\in \mathbb{Z}$, let the multiplicity $m_l(\lambda)$ count the number of components
$\lambda_j$, $1\leq j\leq n$ such that $\lambda_j=l$.
 We write $\beta^*_l\lambda$ for the point of $\Lambda_{n+1}$ obtained from $\lambda$ by inserting an additional component with value $l$
and---assuming   $m_l(\lambda)>0$---we write $\beta_l\lambda\in \Lambda_{n-1}$ for the result of the inverse operation that deletes  a component with value $l$ from $\lambda$.
Upon defining the following actions on $f\in\mathcal{F}(\Lambda_n)$:
\begin{equation}\label{qboson-rep}
    \begin{split}
  (\beta_l f)(\lambda)&:=
  \begin{cases} f(\beta_l^*\lambda) &\text{if}\  n>0\ (\lambda\in \Lambda_{n-1}),\\
   0& \text{if}\  n=0
   \end{cases}
  \\
(\beta^*_l f)(\lambda)&:=
\begin{cases}
[m_l(\lambda)]f(\beta_l\lambda)&\text{if}\  m_l(\lambda)>0 \\
0&\text{otherwise}
\end{cases}
 \quad ( \lambda\in \Lambda_{n+1}),\\
(N_l f)(\lambda)&:=q^{m_l(\lambda)}f(\lambda) \qquad\quad (\lambda\in \Lambda_n),
    \end{split}
\end{equation}
where $0<q<1$ and
$$[m]:=\frac{1-q^{m}}{1-q}=1+q+\cdots +q^{m-1}$$
for $m=0,1,2,\ldots$, it is readily verified that one ends up with a representation of the $q$-boson field algebra on $\mathcal{F}$:
\begin{subequations}
\begin{equation}
[\beta_l,\beta_k]=[\beta^*_l,\beta^*_k]=[N_l,N_k]=[N_l,\beta_k]=[N_l,\beta^*_k]=[\beta_l,\beta^*_k]= 0
\end{equation}
for $l\neq k$, and
\begin{equation}
N_l\beta_l^*=q \beta_l^* N_l,
\  \beta_l N_l = q  N_l\beta_l,
\ [\beta_l,\beta_l^*]=N_l,
\ \beta_l\beta_l^* - q\beta_l^*\beta_l=1.
\end{equation}
\end{subequations}
Here the brackets refer to the (ordinary) commutator product.
By construction $\beta_l$, $\beta_l^*$  and $N_l$ map $\mathcal{F}(\Lambda_n)$ into $\mathcal{F}(\Lambda_{n-1})$, $\mathcal{F}(\Lambda_{n+1})$ and $\mathcal{F}(\Lambda_n)$, respectively (with the convention that $\mathcal{F}(\Lambda_{-1})=\{0\}$).

The Hamiltonian of the $q$-boson system
\begin{subequations}
\begin{equation}\label{qbH}
\text{H}_q = \sum_{l\in \mathbb{Z}} (a_l + a_l^*)
\end{equation}
is built of hopping operators
\begin{equation}\label{hopping-op}
a_l:=\beta_{l+1}^*\beta_l  \quad\text{and}\quad  a^*_l:=\beta_{l+1}\beta^*_l
\end{equation}
\end{subequations}
for which the $n$-particle subspace $\mathcal{F}(\Lambda_n)$ is stable. These hopping operators
represent the plactic subalgebra of the $q$-boson field algebra \cite[\text{Sec.}~3.4]{kor:cylindric}:
\begin{subequations}
\begin{equation}
a_l a_k = a_k a_l
\end{equation}
for $|l-k|>1$ (nonlocal commutativity) and
\begin{equation}
  \begin{split}
a_{l+1}a_l^2 + q a_l^2 a_{l+1} &= (1+q) a_l a_{l+1} a_l\\
a_{l+1}^2a_l + qa_l a_{l+1}^2 &= (1+q) a_{l+1} a_j a_{l+1}
      \end{split}
 \end{equation}
 \end{subequations}
(quantum Knuth relations), with analogous relations (involving reversely ordered products) for $a_l^*$, $l\in\mathbb{Z}$.
The  $q$-boson Hamiltonian $\text{H}_q$ \eqref{qbH}, \eqref{hopping-op} constitutes a well-defined operator on $\mathcal{F}$ as for any $f\in \mathcal{F}(\Lambda_n)$ and $\lambda\in\Lambda_n$ the infinite sum $(\text{H}_q f)(\lambda)$ containes only a {\em finite} number of nonvanishing terms.

To facilitate the comparison with previous literature on the $q$-boson system \cite{bog-ize-kit:correlation,tsi:quantum,kor:cylindric}, let us denote the characteristic function in $\mathcal{F}(\Lambda_n)$ supported on $\lambda\in\Lambda_n$ by $|\lambda\rangle$. Then one has that
\begin{equation*}
\beta_l |\lambda\rangle = \begin{cases}
|\beta_l\lambda\rangle &\text{if}\ m_l(\lambda)>0 \\
0&\text{otherwise}
\end{cases},\quad
\beta_l^* |\lambda\rangle =[m_l(\lambda)+1] | \beta_l^*\lambda\rangle,\quad N_l|\lambda\rangle=q^{m_l(\lambda)}|\lambda\rangle .
\end{equation*}
In the standard physical interpretation the state $|\lambda\rangle$ encodes a configuration of $n$ particles on $\mathbb{Z}$---$q$-bosons---with $m_l(\lambda)$ particles occupying the site $l\in\mathbb{Z}$.  The operators $\beta_l^*$ and $\beta_l$ play the role of particle creation and annihilation operators and $N_l$ counts the number of particles at the site $l$ (as a power of $q$). The hopping operators $a_l$ and $a_l^*$ move a particle from $l$ to $l+1$ and vice versa.

\section{Integrability}
To any partition $\eta=(\eta_1,\ldots,\eta_p )$ with $\eta_1\geq\eta_2\geq  \cdots \geq \eta_p\geq 1$, we associate
the following hopping operators on $\mathcal{F}$:
\begin{equation}\label{monomial}
  \begin{split}
    m_\eta(a)&:=\sideset{}{'}\sum_{\sigma\in S_p}  \sum_{l_1 < l_2 < \cdots < l_p} a_{l_1}^{\eta_{\sigma_1}} \cdots a_{l_p}^{\eta_{\sigma_p}}, \\
  m_\eta(a^*)&:=\sideset{}{'}\sum_{\sigma\in S_p} \sum_{l_1 > l_2 > \cdots > l_p} (a^*_{l_1})^{\eta_{\sigma_1}} \cdots (a^*_{l_p})^{\eta_{\sigma_p}}.
  \end{split}
  \end{equation}
 Here the (infinite) inner summations are over all strictly monotonous $p$-tuples $(l_1,\ldots ,l_p)$ of indices in $\mathbb{Z}$; the primes attached to the (finite) outer summations indicate that these are meant over the orbit of all {\em distinct} compositions $(\eta_{\sigma_1},\ldots ,\eta_{\sigma_p})$ obtained by reordering the parts of $\eta$ via permutations $$\sigma=\left( \begin{matrix} 1& 2& \cdots & p \\
 \sigma_1&\sigma_2&\cdots & \sigma_p
 \end{matrix}\right)$$ belonging to the symmetric group $S_p$. Notice that for given $f\in\mathcal{F}(\Lambda_n)$ and $\lambda\in\Lambda_n$, the
infinite sums $(m_\eta(a)f)(\lambda)$ and
 $(m_\eta(a^*)f)(\lambda)$ contain only a finite number of nonzero terms, so these operators are again well-defined on
 $\mathcal{F}$.
 For $r\in\mathbb{N}$ we now set
\begin{equation}\label{integrals}
  \begin{split}
H_r&:=\sum_{|\eta|=r} \frac{m_\eta(a)}{[\eta]!} \quad \text{and}\quad H^*_r:=\sum_{|\eta|=r} \frac{m_\eta(a^*)}{[\eta]!} ,
  \end{split}
\end{equation}
where $[\eta]!=[(\eta_1,\dots,\eta_p)]! := [\eta_1]! \cdots [\eta_p]!$ and $[m]!:=[m]  [m-1]\cdots [1]$ for $m=0,1,2,\ldots$ (with the convention that $[0]!=1$), and  $|\eta |:=\eta_1+\cdots +\eta_p$ (so
the (finite) summation in Eq. \eqref{integrals} is over all partitions of $r$). The $q$-boson Hamiltonian \eqref{qbH}, \eqref{hopping-op} becomes in terms of these operators:
\begin{equation}\label{hamiltonian}
\text{H}_q=H_1+H_1^*.
\end{equation}

Our main result is the following explicit formula for the action of $H_r$ and $H_r^*$ in the $n$-particle subspace $\mathcal{F}(\Lambda_n)$, which will be proven shortly in the next section.

\begin{theorem}[Explicit action of $H_r^{(*)}$ in $\mathcal{F}(\Lambda_n)$] \label{Hr-action:thm}
For any $f\in \mathcal{F}(\Lambda_n)$ and $\lambda\in\Lambda_n$, one has that
\begin{subequations}
 \begin{equation}\label{Hr-action}
    \begin{split}
(H_r f)(\lambda) &=  \sum_{\substack{J\subset\{1,2,\ldots,n\} ,|J|=r\\ \lambda-e_J\in\Lambda_n  }} V_{\lambda,J^c}\, f(\lambda-e_J), \\
(H^*_rf)(\lambda) &=  \sum_{\substack{J\subset \{1,2,\ldots,n\} ,|J|=r\\ \lambda+e_J\in\Lambda_n }} V_{\lambda,J}\, f(\lambda+e_J) ,
    \end{split}
 \end{equation}
 where $|J|$ denotes the cardinality of $J\subset \{1,\ldots,n\}$, $J^c:=\{ 1,\ldots ,n\}\setminus J$, $e_J:=\sum_{j\in J} e_j$ (with $e_1,\ldots ,e_n$ referring to the standard unit basis of $\mathbb{Z}^n$) and
 \begin{equation}\label{V}
  V_{\lambda,J}:= \prod_{\substack{1 \leq j < k\leq n \\ j\in J, k\in J^c \\ \lambda_j=\lambda_k}}
  \frac{1-q^{k-j+1}}{1-q^{k-j}} .
\end{equation}
\end{subequations}

In particular, for $r>n$ the $n$-particle subspace $\mathcal{F}(\Lambda_n)$ belongs to the kernel of the operators $H_r$ and $H_r^*$.
 \end{theorem}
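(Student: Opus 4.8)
The plan is to verify the two displayed identities directly at the level of the function action, evaluating both sides at an arbitrary $\lambda\in\Lambda_n$; since the right-hand sides are linear and local in $f$, it suffices to track how the constituent hopping operators transform the pointwise values of $f$. First I would record the elementary action. From the defining formulas \eqref{qboson-rep} one computes $(a_l f)(\lambda)=[m_{l+1}(\lambda)]\,f(\beta_l^*\beta_{l+1}\lambda)$ whenever $m_{l+1}(\lambda)>0$ (and $0$ otherwise), where $\beta_l^*\beta_{l+1}\lambda\in\Lambda_n$ is the configuration obtained by lowering one coordinate of value $l+1$ to the value $l$. Iterating, and using that each application lowers the relevant multiplicity by one, yields
\[
(a_l^{\eta}f)(\lambda)=\frac{[m_{l+1}(\lambda)]!}{[m_{l+1}(\lambda)-\eta]!}\,f(\lambda^{(l,\eta)}),
\]
valid when $m_{l+1}(\lambda)\geq\eta$, with $\lambda^{(l,\eta)}$ the configuration having $\eta$ coordinates lowered from $l+1$ to $l$.

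Next I would evaluate a single monomial $a_{l_1}^{\eta_1'}\cdots a_{l_p}^{\eta_p'}$ with $l_1<\cdots<l_p$ by peeling off the operators from the left. The key observation---and the reason the answer is so clean---is a \emph{no-feedback} property: after the first $i$ factors have acted, the intermediate configuration differs from $\lambda$ only at values in $\{l_1,l_1+1,\ldots,l_i,l_i+1\}$, all strictly below $l_{i+1}+1$, so the multiplicity $m_{l_{i+1}+1}$ seen by the next factor still equals $m_{l_{i+1}+1}(\lambda)$. Hence the total coefficient factorizes as $\prod_i [m_{l_i+1}(\lambda)]!/[m_{l_i+1}(\lambda)-\eta_i']!$ with no coupling between consecutive operators, and---since the smallest index acts first---no coordinate is moved more than once. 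Consequently the net effect of the monomial is to lower exactly $r=\sum_i\eta_i'$ coordinates by one, producing $f(\lambda-e_J)$ for a unique subset $J$ with $\lambda-e_J\in\Lambda_n$.

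With this in hand I would carry out the resummation. The triple sum defining $H_r$ (over partitions $\eta$ of $r$, distinct compositions, and strictly increasing index tuples) reorganizes, via the primed outer sums, into a single sum over finitely supported move-profiles; equivalently, over admissible subsets $J$, each arising from exactly one monomial. Writing $\delta_v$ for the number of coordinates of value $v$ that are lowered, the weight $[\eta]!^{-1}=\prod_v[\delta_v]!^{-1}$ combines with the factorized coefficient above into the product of Gaussian binomials $\prod_v\binom{m_v(\lambda)}{\delta_v}_q$, where $\binom{m}{d}_q:=[m]!/([d]!\,[m-d]!)$. It then remains to identify this product with $V_{\lambda,J^c}$. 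Since keeping $\lambda-e_J$ dominant forces $J$ to consist, within each value block, of the last $\delta_v$ indices, the product \eqref{V} factorizes over blocks, and on each block a short telescoping collapses $\prod_{j<k}\frac{1-q^{k-j+1}}{1-q^{k-j}}$ to $\binom{m_v(\lambda)}{\delta_v}_q$, completing the match. The statement for $H_r^*$ follows by an entirely analogous computation, interchanging the roles of creation and annihilation operators and replacing increasing index tuples by decreasing ones; and the final assertion is then immediate, since for $r>n$ there is no subset $J\subset\{1,\ldots,n\}$ of cardinality $r$, so the sums are empty and $H_r,H_r^*$ annihilate $\mathcal{F}(\Lambda_n)$.

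The main obstacle I anticipate is organizational rather than conceptual: making the no-feedback bookkeeping and the correspondence ``admissible $J\leftrightarrow$ unique monomial'' precise, so that no contribution is double-counted or missed, and checking that the $q$-factorial weights assemble exactly into Gaussian binomials once the prime on the outer sum is interpreted as a sum over distinct compositions. The one genuinely computational point is the block-wise telescoping identity equating the $V$-product with $\binom{m_v(\lambda)}{\delta_v}_q$, which is a routine $q$-analogue after shifting indices so that each block occupies $\{1,\ldots,m_v(\lambda)\}$.
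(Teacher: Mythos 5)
Your proposal is correct and follows essentially the same route as the paper's proof: an inductive formula for the action of powers of a single hopping operator with $q$-factorial coefficients, a factorized action of each ordered monomial (your ``no-feedback'' observation is exactly what the ordering $l_1<\cdots<l_p$, resp.\ $l_1>\cdots>l_p$, is exploited for in the paper), a bijection between nonvanishing monomials and admissible subsets $J$, and the identification of the resulting product of Gaussian binomials with $V_{\lambda,J^c}$. The only cosmetic differences are that the paper carries out the computation for $H_r^*$ and obtains $H_r$ by the analogous argument (or by adjointness with respect to $\langle\cdot,\cdot\rangle_n$), and that it derives the $q$-binomial form of $V_{\lambda,J}$ from Poincar\'e polynomials of stabilizer subgroups rather than by your direct block-wise telescoping, which amounts to the same identity.
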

In \cite[App.~C]{die-ems:unitary} the discrete difference operators $H_r$ on the RHS of Eqs. \eqref{Hr-action}, \eqref{V} were obtained (up to a trivial similarity transformation and replacing $q$ by $q^2$) as $n$ algebraically independent commuting central elements arising from a difference-reflection representation of  the affine Hecke algebra associated with $GL(n;\mathbb{C})$.
Since $H_n$ acts on $f\in \mathcal{F}(\Lambda_n)$ simply as an overall translationonal symmetry: $(H_nf)(\lambda)=f(\lambda-(e_1+\cdots +e_n))$,
it is immediate from the above formulas that the discrete difference operators in $\mathcal{F}(\Lambda_n)$ corresponding to $H_r^*$ can be written  in turn as
$H_{n-r}H_n^{-1}$  (with the convention that $H_0:=1$). The upshot is that all $H_r$ and $H_{r}^*$ \eqref{integrals} commute as operators on $\mathcal{F}$, which proves the integrability of the infinite $q$-boson system.

\begin{corollary}[Integrability]\label{integrability:cor}
All operators $H_r$ and $H_{r}^*$ in Eq. \eqref{integrals} mutually commute on $\mathcal{F}$:
\begin{equation}
[H_r,H_{r^\prime}] =0,\quad [H_r^*,H_{r^\prime}^*] =0, \quad [H_r ,H_{r^\prime}^*] =0\qquad
(\forall r,r^\prime \in\mathbb{N}),
\end{equation}
and they restrict to $n$ algebraically independent operators on the invariant $n$-particle subspace $\mathcal{F}(\Lambda_n)$.
\end{corollary}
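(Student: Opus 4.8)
The plan is to exploit the block-diagonal structure of the operators together with the explicit formulas of Theorem~\ref{Hr-action:thm}. Since each $H_r$ and $H_r^*$ maps $\mathcal{F}(\Lambda_n)$ into itself and $\mathcal{F}=\bigoplus_{n\ge 0}\mathcal{F}(\Lambda_n)$, it suffices to verify all three families of commutation relations after restriction to a single $n$-particle subspace $\mathcal{F}(\Lambda_n)$: a commutator that vanishes on every summand vanishes on $\mathcal{F}$. On $\mathcal{F}(\Lambda_n)$ the operators $H_r$ and $H_r^*$ with $r>n$ act as zero by the last assertion of Theorem~\ref{Hr-action:thm}, so only the finitely many operators $H_1,\ldots,H_n$ and $H_1^*,\ldots,H_n^*$ require attention.

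First I would record that the difference operators $H_1,\ldots,H_n$ on $\mathcal{F}(\Lambda_n)$ coincide, up to the trivial similarity transformation and the substitution $q\mapsto q^2$ indicated in the remark, with the commuting central elements produced by the difference-reflection representation of the affine Hecke algebra of $GL(n;\mathbb{C})$ in \cite[App.~C]{die-ems:unitary}. This immediately yields $[H_r,H_{r'}]=0$ on $\mathcal{F}(\Lambda_n)$ together with the algebraic independence of $H_1,\ldots,H_n$, settling the first commutator relation and the final clause of the corollary.

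The crucial computational step is the identity $H_r^*=H_{n-r}H_n^{-1}$ on $\mathcal{F}(\Lambda_n)$ asserted in the remark. I would verify this directly from the explicit formulas: since $H_n$ is the translation $f(\lambda)\mapsto f(\lambda-(e_1+\cdots+e_n))$, its inverse sends $f(\lambda)\mapsto f(\lambda+(e_1+\cdots+e_n))$, and applying $H_{n-r}$ afterwards produces a sum over subsets $K\subset\{1,\ldots,n\}$ with $|K|=n-r$. Reindexing by the complements $J=K^c$ of cardinality $r$, one checks that $V_{\lambda,K^c}=V_{\lambda,J}$, that the shifted argument $\lambda-e_K+(e_1+\cdots+e_n)$ equals $\lambda+e_J$, and that---because translation by $(1,\ldots,1)$ preserves the dominant cone $\Lambda_n$---the admissibility constraint $\lambda-e_K\in\Lambda_n$ becomes exactly $\lambda+e_J\in\Lambda_n$. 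Matching terms then reproduces the formula for $H_r^*$ in Theorem~\ref{Hr-action:thm}.

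With this identity in hand the remaining relations follow formally. Since $H_n$ belongs to the commuting family $\{H_1,\ldots,H_n\}$ and is invertible on $\mathcal{F}(\Lambda_n)$, its inverse $H_n^{-1}$ commutes with every $H_j$; hence $[H_r,H_{r'}^*]=[H_r,H_{n-r'}H_n^{-1}]=0$ and $[H_r^*,H_{r'}^*]=[H_{n-r}H_n^{-1},H_{n-r'}H_n^{-1}]=0$ on $\mathcal{F}(\Lambda_n)$, in each case because the factors commute pairwise. The same identity shows that the $H_r^*$ are rational expressions in $H_1,\ldots,H_n$, so they contribute no new algebraically independent elements and the count of $n$ is sharp. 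I expect the only genuine subtlety to lie in the bookkeeping of the $V_{\lambda,J}$-weights and the domain conditions in the verification of $H_r^*=H_{n-r}H_n^{-1}$; once that identity is secured the commutativity statements are a routine consequence of $H_n$ being a central, invertible translation.
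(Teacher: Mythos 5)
Your proposal is correct and follows essentially the same route as the paper: commutativity and algebraic independence of $H_1,\ldots,H_n$ on $\mathcal{F}(\Lambda_n)$ are imported from the affine Hecke algebra construction of \cite[App.~C]{die-ems:unitary}, and the identity $H_r^*=H_{n-r}H_n^{-1}$ (with $H_n$ an invertible translation) settles the remaining commutators. Your explicit reindexing $J=K^c$ verifying that identity is exactly the check the paper declares ``immediate from the above formulas,'' so you have merely filled in detail rather than changed the argument.
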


A quasi-periodic counterpart of the formula in Theorem \ref{Hr-action:thm} for the {\em finite} $q$-boson system on $\mathbb{Z}_m$ can be found in Ref.~\cite{kor:cylindric} (see Prp.~3.11 and Prp.~6.1).
For Dirichlet type boundary conditions corresponding to the case of a vanishing quasi-periodicity parameter,
one arrives---in the limit when the lattice size parameter $m$ tends to infinity---at an analogue of the commutativity in Corollary \ref{integrability:cor} for the $q$-boson system on the (semi-)infinite lattice $\mathbb{N}$
as a consequence of \cite[Cor.~\text{3.3}]{kor:cylindric} (cf. also \cite[\text{Thm}.~5.3]{die:diagonalization}).
In principle, the commutativity in Corollary \ref{integrability:cor} for the $q$-boson system on $\mathbb{Z}$ could also be recovered along these lines upon centering the finite lattice around the origin before performing the infinite size limit. Alternatively, the commutativity in question can also be viewed as a degeneration of the commutativity of the discrete Macdonald-Ruijsenaars operators \cite{rui:complete}, \cite[\text{Sec.}~VI.6]{mac:symmetric}
via a limit transition (that takes Macdonald symmetric functions to Hall-Littlewood symmetric functions).

It is straightforward from Theorem \ref{Hr-action:thm} and Eq. \eqref{hamiltonian} that the action of the $q$-boson Hamiltonian $\text{H}_q$ \eqref{qbH}, \eqref{hopping-op} in the $n$-particle subspace $\mathcal{F}(\Lambda_n)$ is given by
\begin{equation}
(\text{H}_q f)(\lambda) =\sum_{\substack{1\leq j\leq n,\, \epsilon=\pm 1 \\ \lambda+\epsilon e_j\in \Lambda_n}} [m_{\lambda_j}(\lambda)]f(\lambda+\epsilon e_j)
\end{equation}
($f\in \mathcal{F}(\Lambda_n)$, $\lambda\in\Lambda_n$).

\section{Proof of  the main Theorem \ref{Hr-action:thm}}
We will determine the action of $H_r^*$ on $f\in\mathcal{F}_n$ by direct computation in three steps; the calculation of $H_rf$ is completely analogous so its details will be suppressed (but cf. Remark \ref{Hr-action:rem} below for an alternative shortcut yielding the action of $H_r$ from that of $H_r^*$ via adjointness).

\subsection{}\label{step1}
In the first step $V_{\lambda ,J}$ \eqref{V} (with $\lambda, \lambda+e_J\in\Lambda_n$) is rewritten in $q$-binomial form by means of well-known product formulas for the Poincar\'e polynomial
of the symmetric group:
\begin{subequations}
\begin{equation}\label{poincare}
S_n(q):=\sum_{\sigma\in S_n}q^{\ell (\sigma)}= \prod_{1\leq j < k\leq n}\frac{1-q^{1+k-j}}{1-q^{k-j}} = \prod_{1\leq j \leq n}\frac{1-q^j}{1-q}=[n]!
\end{equation}
(where $\ell(\sigma)$ denotes the {\em length} of $\sigma$).
Since the stabilizer subgroup $S_{n,\lambda}:=\{ \sigma\in S_n\mid \sigma\lambda=\lambda\}$
is isomorphic to the direct product $\prod_{l\in\mathbb{Z}} S_{m_l(\lambda)}$, the corresponding Poincar\'e polynomial
 factorizes in turn as
\begin{equation}\label{poincare-stabilizer}
S_{n,\lambda}(q)=\sum_{\substack{\sigma\in S_n\\ \sigma\lambda =\lambda}}q^{\ell (\sigma)}= \prod_{\substack{1\leq j < k\leq n\\ \lambda_j=\lambda_k}}\frac{1-q^{1+k-j}}{1-q^{k-j}}=\prod_{l\in\mathbb{Z}}[m_l(\lambda )]!
\end{equation}
and similarly
$S_{n,\lambda}\cap S_{n,\lambda+e_J}\cong \prod_{l\in\mathbb{Z}} (S_{m_{l,J}(\lambda)}\times S_{m_{l,J^c}(\lambda)})$ so
\begin{eqnarray}\label{poincare-intersection}
\lefteqn{(S_{n,\lambda}\cap S_{n,\lambda+e_J})(q)=} && \\
&& \prod_{\substack{j,k\in J\\ j<k,\, \lambda_j=\lambda_k}}\frac{1-q^{1+k-j}}{1-q^{k-j}}
\prod_{\substack{j,k\in J^c\\ j<k,\, \lambda_j=\lambda_k}}\frac{1-q^{1+k-j}}{1-q^{k-j}}
=\prod_{l\in\mathbb{Z}}[m_{l,J}(\lambda )]! [m_{l,J^c}(\lambda )]! ,\nonumber
\end{eqnarray}
\end{subequations}
where $m_{l, J}(\lambda)$ denotes the number of components $\lambda_j$, $j\in J$ such that $\lambda_j=l$ (i.e. $m_{l,J}(\lambda)+m_{l,J^c}(\lambda)=m_l(\lambda)$). Division of Eqs \eqref{poincare-stabilizer} and \eqref{poincare-intersection} now reveals that
\begin{equation}\label{binom}
V_{\lambda,J}=V_{\lambda,J}V_{\lambda,J^c}=
\frac{S_{n,\lambda}(q)}{(S_{n,\lambda}\cap S_{n,\lambda+e_J})(q)}=
 \prod_{l\in \mathbb{Z}}  { m_l(\lambda) \brack m_{l,J}(\lambda) }
\end{equation}
where ${m\brack k}:= \frac{[m]!}{[k]! [m-k]!}$ for $m\geq k\geq 0$. (Notice in this connection that here
$V_{\lambda,J^c}=1$, because the
product
in question is {\em empty}
as consequence of the assumption that $\lambda+e_J$ belongs to $\Lambda_n$.)

\subsection{}\label{step2}
It is clear by induction on $m\geq1$ that for any $f\in\mathcal{F}_n$ and $\lambda\in\Lambda_n$:
\begin{equation}\label{creation}
((a_l^*)^m f)( \lambda)=
\begin{cases}
[m]! { m_l(\lambda)\brack m}
 f(a_l^m\lambda ) & \text{if}\ m\leq m_l(\lambda),\\
0&\text{if}\ m> m_l(\lambda) ,
 \end{cases}
\end{equation}
where in the former case $a_l^m\lambda=\lambda+e_d+e_{d+1}+\cdots +e_{d+m-1}$---with $d=d(\lambda ,l):=\min\{ j \mid \lambda_j=l\}$---belongs to $\Lambda_n$ (because of the condition that $m\leq m_l(\lambda)$).
By iterating the formula in Eq.~\eqref{creation} it readily follows that---for
$l_1>l_2>\cdots >l_p$ and a composition $(m_1,m_2,\ldots,m_p)$ obtained by reordering the parts of the partition
$\eta=(\eta_1,\eta_2, \ldots ,\eta_p)$ (with $\eta_p\geq 1$)---the action of the corresponding monomial in $m_\eta (a^*)$ \eqref{monomial} is given by
\begin{align}
 ((a^*_{l_1})^{m_1} \cdots (a^*_{l_p})^{m_p} f)(\lambda)&=
 [\eta]!
 { m_{l_1}(\lambda)\brack m_1}\cdots { m_{l_p}(\lambda)\brack m_p} f(a_{l_p}^{m_p}\cdots a_{l_1}^{m_1}\lambda) \nonumber\\
&= [\eta]! \prod_{l\in\mathbb{Z}} { m_l(\lambda)\brack m_{l,J}(\lambda)}f(\lambda +e_J)  \label{action-term}
\end{align}
provided $m_k\leq m_{l_k}(\lambda)$ for $k=1,\ldots ,p$, and equal to zero otherwise. Here
\begin{equation}
J=\{ d_k,d_k+1,\ldots ,d_k+m_k-1\mid k=1,\ldots ,p\}
\end{equation}
with $d_k:=d(\lambda,l_k)=\min\{ j \mid \lambda_j=l_k\}$. The condition
that $m_k\leq m_{l_k}(\lambda)$ for $k=1,\ldots ,p$ guarantees that $J$ is a subset of
 $\{ 1,\ldots ,n\}$ of cardinality $|J|=m_1+\cdots +m_p=|\eta|$
and that
$\lambda+e_J=a_{l_p}^{m_p}\cdots a_{l_1}^{m_1}\lambda\in\Lambda_n$.

\subsection{}\label{step3}
From Steps \ref{step1} and \ref{step2}  one learns  that for $r\leq n$ the action
$(H_r^*f)(\lambda)$  is built of a sum of terms of the form $V_{\lambda,J}f(\lambda+e_J)$ (cf. Eqs. \eqref{binom}  and \eqref{action-term}), with $J\subset\{ 1,\ldots ,n\}$ satisfying that $|J|=r$ and $\lambda+e_J\in\Lambda_n$. For
$r>n$ on the other hand $(H_r^*f)(\lambda)$ vanishes (since then for all monomial terms
$m_1+\cdots +m_p=|\eta|=r>n\geq m_{l_1}(\lambda) +\cdots +m_{l_p}(\lambda)$).

To complete the proof of the explicit formula for $H_r^*f$ in Theorem \ref{Hr-action:thm},
it only remains to infer that in the former situation all terms on the RHS actually do occur and with multiplicity $1$.
Indeed, this is clear from the observation that
given $\lambda\in\Lambda_n$ and $J\subset \{ 1,\ldots ,n\}$ such that $\lambda+e_J\in\Lambda_n$, the corresponding $l_1>l_2>\cdots >l_p$ and $m_1,m_2,\ldots ,m_p$ for which
$$([m_1]! \cdots [m_p]!)^{-1}((a^*_{l_1})^{m_1} \cdots (a^*_{l_p})^{m_p} f)(\lambda)=V_{\lambda,J}f(\lambda+e_J)$$
are uniquely
 retrieved
by ordering the elements of the set $\{ \lambda_j\mid j\in J\}=\{ l_1,\ldots ,l_p\}$ and picking
$m_k=m_{l_k ,J}(\lambda)$, $k=1,\ldots ,p$.

\section{Diagonalization}
For $\lambda\in \Lambda_n$ and a spectral parameter $\xi=(\xi_1,\xi_2,\ldots ,\xi_n)$ taken from the open fundamental alcove
\begin{equation}\label{alcove}
A:=\{\xi\in\mathbb{R}^n\mid \pi >\xi_1>\xi_2>\cdots >\xi_n>-\pi\} ,
\end{equation}
let us define the $n$-variable Hall-Littlewood function as \cite[Ch.~III]{mac:symmetric}
\begin{subequations}
\begin{equation}\label{HLf}
\phi_\xi (\lambda) :=
\sum_{\sigma\in S_n}
C(\xi_{\sigma}) e^{i \lambda\cdot \xi_{\sigma}} ,
\end{equation}
with $\xi_\sigma:=(\xi_{\sigma_1},\xi_{\sigma_2},\ldots ,\xi_{\sigma_n})$
and
\begin{equation}\label{Cf}
C(\xi) :=\prod_{1\leq j<k \leq n} \frac{1-q e^{i(\xi_{k}-\xi_j)}}{1-e^{i(\xi_{k}-\xi_j)}}  .
\end{equation}
\end{subequations}
It is immediate from the explicit action in Theorem \ref{Hr-action:thm} and the Pieri formulas for the Hall-Littlewood functions \cite[Sec.~III.3]{mac:symmetric} that the action of the commuting operators $H_r$ and $H_r^*$
\eqref{integrals} in the $n$-particle subspace $\mathcal{F}(\Lambda_n)$  is diagonal on $\phi_\xi$ \eqref{HLf}, \eqref{Cf}.

\begin{corollary}[Diagonalization]\label{diagonal:cor} For any spectral value $\xi$ in the fundamental alcove $A$ \eqref{alcove},
the $n$-variable Hall-Littlewood function $\phi_\xi$ \eqref{HLf}, \eqref{Cf} constitutes  a joint eigenfunction for the commuting operators $H_r$ and $H_r^*$ \eqref{integrals} in $\mathcal{F}(\Lambda_n)$:
\begin{subequations}
\begin{equation}\label{ev:eq}
H_r\phi_\xi =  e_r(e^{-i\xi}) \phi_\xi \quad \text{and}\quad
H_r^*\phi_\xi =  e_r(e^{i\xi}) \phi_\xi  \qquad (r=1,\ldots ,n),
\end{equation}
with $  e_r(e^{-i\xi}):=e_r(e^{-i\xi_1},\ldots ,e^{-i\xi_n})$ and $e_r(e^{i\xi}):=e_r(e^{i\xi_1},\ldots ,e^{i\xi_n})$, where $e_r$ refers to the $r$th elementary symmetric function
\begin{equation}\label{erf}
e_r(x_1,\ldots ,x_n):=\sum_{1\leq j_1<j_2<\cdots <j_r\leq n} x_{j_1}x_{j_2}\cdots x_{j_r}.
\end{equation}
\end{subequations}
\end{corollary}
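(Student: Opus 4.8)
The plan is to reduce the eigenvalue equations \eqref{ev:eq} to the classical Pieri formula for Hall--Littlewood functions. I would first substitute the definition \eqref{HLf}--\eqref{Cf} of $\phi_\xi$ into the explicit action of $H_r^*$ furnished by Theorem \ref{Hr-action:thm}. Because $\phi_\xi(\lambda+e_J)=\sum_{\sigma\in S_n}C(\xi_\sigma)\,e^{i\lambda\cdot\xi_\sigma}\prod_{j\in J}e^{i\xi_{\sigma_j}}$, this gives $(H_r^*\phi_\xi)(\lambda)=\sum_{\sigma\in S_n}C(\xi_\sigma)\,e^{i\lambda\cdot\xi_\sigma}\sum_J V_{\lambda,J}\prod_{j\in J}e^{i\xi_{\sigma_j}}$, the inner sum running over $J\subset\{1,\dots,n\}$ with $|J|=r$ and $\lambda+e_J\in\Lambda_n$. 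Since the target eigenvalue $e_r(e^{i\xi})=\sum_{|K|=r}\prod_{k\in K}e^{i\xi_{\sigma_k}}$ is symmetric in $\xi$, the asserted identity is equivalent to the vanishing, after summation over $\sigma$, of the difference between this constrained $V$-weighted sum and the free unconstrained sum. Here the alcove condition \eqref{alcove} enters: it makes the $x_j:=e^{i\xi_j}$ pairwise distinct, so the coefficients $C(\xi_\sigma)$ are well defined and $\phi_\xi\not\equiv0$.

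Next I would recognize $\phi_\xi$ as a Hall--Littlewood polynomial. Writing $x_j=e^{i\xi_j}$, one has, for fixed $\lambda$, the identification $\phi_\xi(\lambda)=v_\lambda(q)\,P_\lambda(x;q)$ with the fully symmetrized Hall--Littlewood polynomial $P_\lambda$ and the stabilizer normalization $v_\lambda(q)=\prod_{l\in\mathbb{Z}}[m_l(\lambda)]!=S_{n,\lambda}(q)$, precisely the Poincar\'e polynomial of \eqref{poincare-stabilizer} (extended to dominant $\lambda\in\mathbb{Z}^n$ by the translation covariance made manifest through $H_n$). Under this identification the eigenvalue equation $H_r^*\phi_\xi=e_r(e^{i\xi})\phi_\xi$ becomes, for each fixed $\lambda$, a Laurent-polynomial identity in $x$, namely the vertical-strip Pieri formula $e_r(x)\,P_\lambda(x;q)=\sum_\mu \psi'_{\mu/\lambda}(q)\,P_\mu(x;q)$ of \cite[Sec.~III.3]{mac:symmetric}, the sum being over partitions $\mu$ obtained from $\lambda$ by adjoining a vertical $r$-strip. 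Adjoining $e_J$ with $\lambda+e_J\in\Lambda_n$ raises each component by at most one, which is exactly such a vertical strip, so both sums are indexed by the same $\mu=\lambda+e_J$; by linear independence of the $P_\mu$ the eigenvalue equation reduces to the single coefficient identity $V_{\lambda,J}\,v_{\lambda+e_J}(q)=v_\lambda(q)\,\psi'_{(\lambda+e_J)/\lambda}(q)$.

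The one step requiring genuine care -- and the place I would expect to spend most effort, even though it is ultimately routine -- is this coefficient matching, since a slip in the stabilizer normalizations would spoil the Pieri constants. The decisive input is the $q$-binomial evaluation $V_{\lambda,J}=\prod_{l\in\mathbb{Z}}{m_l(\lambda)\brack m_{l,J}(\lambda)}$ already secured in \eqref{binom}; together with the site factorization $v_\lambda(q)=\prod_l[m_l(\lambda)]!$ it splits the desired identity into a product over the occupied sites of elementary relations between $q$-binomial coefficients and the local factors of $\psi'$, which then match by inspection. Finally, the statement for the lowering operators $H_r$ follows with no further computation: using the relation $H_r=H_{n-r}^*H_n$ recorded just after Theorem \ref{Hr-action:thm}, together with $H_n\phi_\xi=e_n(e^{-i\xi})\phi_\xi$ (as $H_n$ acts by the overall translation $\lambda\mapsto\lambda-(e_1+\cdots+e_n)$), one gets $H_r\phi_\xi=e_n(e^{-i\xi})\,e_{n-r}(e^{i\xi})\,\phi_\xi=e_r(e^{-i\xi})\,\phi_\xi$ via $e_{n-r}(x)=e_n(x)\,e_r(x^{-1})$; alternatively one simply repeats the argument above with the dual Pieri rule.
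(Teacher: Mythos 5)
Your proposal is correct and follows essentially the same route as the paper: both reduce the eigenvalue equations, via the explicit action in Theorem \ref{Hr-action:thm}, to the vertical-strip Pieri rule for Hall--Littlewood functions, and both obtain the $H_r$ identities from the $H_r^*$ ones by factoring through the translation $H_n$ (equivalently, dividing by $e_n(e^{i\xi})$ and replacing $r$ by $n-r$). The only difference is one of detail: the paper delegates the exact form of the Pieri identity to a citation, whereas you carry out the normalization $\phi_\xi(\lambda)=v_\lambda(q)P_\lambda(x;q)$ and the $q$-binomial coefficient matching explicitly, which is a correct and self-contained substitute for that reference.
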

\begin{proof}
By Theorem \ref{Hr-action:thm}, the eigenvalue equations in Eqs. \eqref{ev:eq}, \eqref{erf} become explicitly:
 \begin{equation*}
    \begin{split}
e_r(e^{-i\xi}) \phi_\xi(\lambda)  &= \sum_{\substack{J\subset \{1,2,\ldots,n\} ,|J|=r\\ \lambda-e_J\in\Lambda_n  }} V_{\lambda ,J^c} \phi_\xi (\lambda-e_J) ,\\
e_r(e^{i\xi}) \phi_\xi(\lambda )  &= \sum_{\substack{J\subset \{1,2,\ldots,n\} ,|J|=r\\ \lambda+e_J\in\Lambda_n  }}
V_{\lambda ,J} \phi_\xi (\lambda+e_J) ,
  \end{split}
 \end{equation*}
 respectively.
 Both formulas boil down to well-known Pieri identities for the Hall-Littlewood functions \cite[Sec.~III.3]{mac:symmetric}. In the form stated above
the second identity can e.g. be directly retrieved from  \cite[Eq.~(C.11)]{die-ems:unitary} and the first identity follows for $r<n$ from the second upon dividing by $e_n(e^{i\xi})$ and replacing $r$ by $n-r$, whereas for $r=n$
both identities are equivalent and reduce to the elementary translational quasi-periodicity
$\phi_\xi (\lambda+e_1+\cdots+e_n)=e^{i\xi_1+\cdots+i\xi_n}\phi_\xi (\lambda)$ (which is manifest from Eq. \eqref{HLf}).
\end{proof}

It follows in particular that the Hall-Littlewood function $\phi_\xi$ \eqref{HLf}, \eqref{Cf} is  an eigenfunction of the $q$-boson Hamiltonian $\text{H}_q$ \eqref{qbH}, \eqref{hopping-op} in the $n$-particle subspace $\mathcal{F}(\Lambda_n)$:
\begin{equation}
\text{H}_q\phi_\xi=\varepsilon (\xi) \phi_\xi \quad\text{with}\quad \varepsilon(\xi):=2\sum_{j=1}^n \cos (\xi_j)
\end{equation}
(cf Eq. \eqref{hamiltonian}).

\section{Spectral analysis}
To address the completeness of the above eigenfunctions for the infinite $q$-boson system, we pass from our algebraic Fock space $\mathcal{F}$ \eqref{AFock} to a full-fledged Fock space
\begin{equation}\label{HFock}
\mathcal{H}:=\bigoplus_{n\geq 0} \ell^2(\Lambda_n,\delta_n),
\end{equation}
which is built of all linear combinations $\sum_{n\geq 0} c_n f_n$---with $c_n\in\mathbb{C}$ and $f_n\in \ell^2(\Lambda_n,\delta_n)$---such that $\sum_{n\geq 0} |c_n|^2 \langle f_n,f_n\rangle_n<\infty$. Here the $n$-particle Hilbert space $\ell^2(\Lambda_n,\delta_n)=\mathcal{H}\cap\mathcal{F}(\Lambda_n)$ consists of the functions $ f\in\mathcal{F}(\Lambda_n)$ such that
$ \langle f ,f\rangle_n <\infty$, where
\begin{subequations}
\begin{equation}\label{ipn}
\langle f,g\rangle_n:=\sum_{\lambda\in\Lambda_n} f(\lambda) \overline{g(\lambda)} \delta_n(\lambda)
\qquad (f,g\in \ell^2(\Lambda_n,\delta_n))
\end{equation}
with
\begin{equation}
\delta_n(\lambda):=1/S_{n,\lambda}(q) =1/\prod_{l\in\mathbb{Z}} [m_l(\lambda )]!
\end{equation}
\end{subequations}
(cf Eq. \eqref{poincare-stabilizer}).

The representation of the $q$-boson algebra in Eq. \eqref{qboson-rep} readily extends from the dense domain
\begin{equation}\label{domain}
\mathcal{D}:=\mathcal{H}\cap\mathcal{F}
\end{equation}
consisting of the  {\em finite}  linear combinations  $\sum_{n\geq 0} c_n f_n$---with $c_n\in\mathbb{C}$ and $f_n\in \ell^2(\Lambda_n,\delta_n)$---to a bounded representation on the Fock space $\mathcal{H}$ \eqref{HFock}. Indeed,  it is immediate from the definitions that for any
$f\in  \ell^2(\Lambda_{n},\delta_{n})$:
\begin{align}
\langle \beta_l f,\beta_l f\rangle_{n-1}&\leq  (1-q)^{-1} \langle  f,f\rangle_{n}  ,\nonumber\\
\langle \beta_l^* f,\beta_l^*f\rangle_{n+1}&\leq  (1-q)^{-1} \langle  f,f\rangle_{n},\\
\langle N_l f,N_lf\rangle_{n}&\leq  \langle  f,f\rangle_{n}  \nonumber
\end{align}
(where one exploits that $\delta_{n+1,\beta_l^*\lambda}(q)=\delta_{n,\lambda}(q)/[m_l(\lambda)+1]$ for all $\lambda\in\Lambda_n$ and that $[m]\leq 1/(1-q)$ for all $m=0,1,2,\ldots$). The representation at issue moreover preserves the $*$-structure:
\begin{equation}\label{ca-adjointness}
\langle \beta_l^*f,g\rangle_{n+1}= \langle f, \beta_l g\rangle_{n}\quad\text{and}\quad
\langle N_lf,g\rangle_{n}= \langle f, N_l g\rangle_{n}
\end{equation}
(for all $f\in  \ell^2(\Lambda_{n},\delta_{n})$ with
$g\in  \ell^2(\Lambda_{n+1},\delta_{n+1}) $ and $g\in  \ell^2(\Lambda_{n},\delta_{n})$, respectively).

The completeness of the eigenfunctions in Corollary \ref{diagonal:cor} is now obvious from the well-known fact that the Hall-Littlewood functions $\phi_\xi(\lambda)$, $\lambda\in\Lambda_n$
form an orthogonal basis for the Hilbert space $L^2(A,\Delta \text{d}\xi)$ with inner product
\begin{equation}\label{orthogonality}
\langle \hat{f},\hat{g}\rangle_\Delta=\frac{1}{(2\pi)^n}\int_A \hat{f}(\xi)\overline{\hat{g}(\xi)}\Delta(\xi)\text{d}\xi ,
\quad \text{where}\quad \Delta (\xi):= \frac{1}{|C(\xi)|^2}
\end{equation}
with $C(\xi)$ taken from Eq. \eqref{Cf}. More specifically, for any $\lambda, \mu\in\Lambda_n$
one has that \cite[\S 10]{mac:orthogonal}
\begin{equation}
\langle \phi(\lambda) ,\phi (\mu) \rangle_\Delta =\begin{cases}
1/\delta_n(\lambda)&\text{if}\ \lambda =\mu ,\\
0&\text{otherwise}.
\end{cases}
\end{equation}
The corresponding Fourier transform $\boldsymbol{F_q}:\ell^2(\Lambda_n,\delta_n)\to L^2(A,\Delta\text{d}\xi)$
defined by
\begin{subequations}
\begin{equation}\label{ft1}
(\boldsymbol{F_q}f)(\xi):= \langle f,\phi_\xi \rangle_n=\sum_{\lambda\in\Lambda_n}f(\lambda)
\overline{\phi_\xi (\lambda)}\delta_n(\lambda)
\end{equation}
($f\in \ell^2(\Lambda_n,\delta_n)$) thus determines a Hilbert space isomorphism with the inversion formula given by
\begin{equation}\label{ft2}
(\boldsymbol{F_q}^{-1}\hat{f})(\lambda) = \langle \hat{f},\overline{\phi ( \lambda)}\rangle_\Delta=
\frac{1}{(2\pi)^n}\int_A \hat{f}(\xi) \phi_\xi (\lambda)\Delta(\xi)\text{d}\xi
\end{equation}
\end{subequations}
$(\hat{f}\in L^2(A,\Delta\text{d}\xi))$.

By Corollary \ref{diagonal:cor}, this means that in the $n$-particle subspace $\ell^2(\Lambda_n,\delta_n)$ the higher commuting $q$-boson Hamiltonians
\begin{equation}\label{hamiltonians}
\text{H}_{q,r}:= H_r+H_r^*,\qquad r=1,\ldots ,n,
\end{equation}
are unitarily equivalent to bounded self-adjoint multiplication operators $\hat{\text{E}}_1,\ldots ,\hat{\text{E}}_n$ on
$L^2(A,\Delta\text{d}\xi)$ of the form
\begin{subequations}
\begin{equation}
(\hat{\text{E}}_r\hat{f})(\xi):=\varepsilon_r(\xi)\hat{f}(\xi)
\end{equation}
with
 \begin{equation}\label{ev-r}
\varepsilon_r(\xi):= 2\sum_{1\leq j_1<j_2<\cdots <j_r\leq n} \cos (\xi_{j_1}+\xi_{j_2}+\cdots +\xi_{j_r}),
\end{equation}
\end{subequations}
viz.
\begin{equation}\label{s-d}
\text{H}_{q,r}=\boldsymbol{F_q}^{-1}  \circ\hat{\text{E}}_r \circ\boldsymbol{F_q},\qquad r=1,\ldots,n,
 \end{equation}
 on $\ell^2(\Lambda_n,\delta_n)$.
 \begin{theorem}[Spectral decomposition in $\ell^2(\Lambda_n,\delta_n)$]\label{hamiltonians:thm}
 The higher $q$-boson Hamiltonians $\text{H}_{q,1},\ldots ,\text{H}_{q,n}$ \eqref{hamiltonians} consitute $n$ independent commuting bounded self-adjoint operators on $\ell^2(\Lambda_n,\delta_n)$ with purely absolutely continuous spectrum. The spectral decomposition of these Hamiltonians in the $n$-particle subspace $\ell^2(\Lambda_n,\delta_n)$ is given explicitly by Eq. \eqref{s-d}.
 \end{theorem}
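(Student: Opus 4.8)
The plan is to deduce every assertion of the theorem from the single unitary intertwining relation \eqref{s-d}, treating it as the object to be established rather than as a mere restatement of Corollary \ref{diagonal:cor}. First I would record that the Fourier transform $\boldsymbol{F_q}$ of \eqref{ft1}--\eqref{ft2} is a genuine Hilbert space isomorphism $\ell^2(\Lambda_n,\delta_n)\to L^2(A,\Delta\text{d}\xi)$; this is precisely the content of the orthogonality relations \eqref{orthogonality} for the Hall-Littlewood functions cited from Macdonald, which supply both the orthogonality and the completeness of the system $\{\phi_\xi(\lambda)\}_{\lambda\in\Lambda_n}$. Granting this, the strategy is threefold: $(a)$ verify the intertwining $\boldsymbol{F_q}\,\text{H}_{q,r}=\hat{\text{E}}_r\,\boldsymbol{F_q}$ on a dense core; $(b)$ read off boundedness and self-adjointness from the bounded real multiplier $\varepsilon_r$; and $(c)$ obtain the absolute continuity of the spectrum from the regularity of the symbol $\varepsilon_r$ together with the absolute continuity of the reference measure $\Delta\,\text{d}\xi$.

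For step $(a)$ I would take the core to be the finitely supported functions $f\in\ell^2(\Lambda_n,\delta_n)$, which are dense and on which $\text{H}_{q,r}=H_r+H_r^*$ acts as a finite-band difference operator by Theorem \ref{Hr-action:thm}, so that $\text{H}_{q,r}f$ is again finitely supported. For such $f$ the quantity $(\boldsymbol{F_q}\text{H}_{q,r}f)(\xi)=\langle \text{H}_{q,r}f,\phi_\xi\rangle_n$ is a finite sum, and the adjointness relation \eqref{ca-adjointness}---which yields $a_l^\dagger=a_l^*$ and hence $H_r^\dagger=H_r^*$---lets me move the operator onto the (non-$\ell^2$) function $\phi_\xi$ legitimately, since only one side of the pairing need be a finite sum. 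Invoking the pointwise eigenvalue identities of Corollary \ref{diagonal:cor}, namely $H_r\phi_\xi=e_r(e^{-i\xi})\phi_\xi$ and $H_r^*\phi_\xi=e_r(e^{i\xi})\phi_\xi$, together with $\overline{e_r(e^{-i\xi})}=e_r(e^{i\xi})$ for real $\xi$, I obtain $(\boldsymbol{F_q}\text{H}_{q,r}f)(\xi)=\varepsilon_r(\xi)(\boldsymbol{F_q}f)(\xi)$ with $\varepsilon_r$ as in \eqref{ev-r}. This is the desired intertwining on the core.

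Step $(b)$ is then automatic: since $\varepsilon_r$ is a bounded continuous real function on $A$, the estimate $\|\text{H}_{q,r}f\|_n=\|\hat{\text{E}}_r\boldsymbol{F_q}f\|_\Delta\leq\|\varepsilon_r\|_\infty\|f\|_n$ holds on the core, so $\text{H}_{q,r}$ extends to a bounded operator coinciding with $\boldsymbol{F_q}^{-1}\hat{\text{E}}_r\boldsymbol{F_q}$, which establishes both boundedness and the decomposition \eqref{s-d}; self-adjointness follows from $\varepsilon_r$ being real-valued and $\boldsymbol{F_q}$ unitary. Commutativity and the stated independence are inherited from Corollary \ref{integrability:cor}, equivalently from the commuting, functionally independent multipliers $\varepsilon_1,\ldots,\varepsilon_n$. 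For step $(c)$, the spectrum of $\text{H}_{q,r}$ equals the essential range of $\varepsilon_r$, and its purely absolutely continuous nature follows because the level sets $\{\xi\in A\mid \varepsilon_r(\xi)=c\}$ of the non-constant real-analytic function $\varepsilon_r$ have Lebesgue measure zero, whence the push-forward of the absolutely continuous measure $\Delta\,\text{d}\xi$ under $\varepsilon_r$ carries no point masses and is absolutely continuous; the joint statement is treated analogously via the push-forward under the map $(\varepsilon_1,\ldots,\varepsilon_n)$.

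The hard part will be the rigorous passage in step $(a)$ from the formal, pointwise eigenvalue equation of Corollary \ref{diagonal:cor}---whose eigenfunctions $\phi_\xi$ are generalized, non-normalizable eigenfunctions lying outside $\ell^2(\Lambda_n,\delta_n)$---to a bona fide operator identity between bounded operators on the Hilbert space; the finite-support core together with the adjointness relation \eqref{ca-adjointness} are precisely the tools that make this transition legitimate. A secondary technical point, needed for the absolute continuity in step $(c)$, is to confirm that the spectral push-forward measure is genuinely absolutely continuous, which rests on the real-analyticity and non-degeneracy of the symbol map rather than on any further input from the $q$-boson algebra.
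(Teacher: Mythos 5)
Your proposal is correct and follows essentially the same route as the paper, which derives the theorem from the orthogonality/completeness of the Hall--Littlewood basis (the unitarity of $\boldsymbol{F_q}$) combined with the generalized eigenvalue equations of Corollary \ref{diagonal:cor}, the identity $e_r(e^{-i\xi})+e_r(e^{i\xi})=\varepsilon_r(\xi)$, and the boundedness and real-valuedness of the multiplier. The paper leaves the passage from the pointwise eigenvalue equations to the operator identity \eqref{s-d} implicit, whereas you make it rigorous via the finite-support core and the adjointness relation (exactly the mechanism of Remark \ref{Hr-action:rem}); this is a welcome amplification rather than a different argument.
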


As a consequence, the infinite  $q$-boson hierarchy
\begin{equation}\label{hierarchy}
\text{H}_{q,r}= H_r+H_r^*,\qquad r\in\mathbb{N}
\end{equation}
consists in turn of (commuting) symmetric operators  in Fock space on the dense domain $\mathcal{D}$ \eqref{domain}.
The operators in question turn out to be essentially self-adjoint  and unbounded in $\mathcal{H}$ \eqref{HFock}, because
for $z\in\mathbb{C}\setminus\mathbb{R}$ the range $(\text{H}_{q,r}-z)\mathcal{D}$ is dense in $\mathcal{H}$
(as $(\text{H}_{q,r}-z)$ maps $\ell^2(\Lambda_n,\delta_n)$  onto itself)
and $\lim_{n\to\infty} \sup_{\xi\in A} |\varepsilon_r(\xi)|=\infty$, respectively. This permits to adapt the integrability result in Corollary \ref{integrability:cor} to the present setting as follows.

\begin{theorem}[Integrability] The higher Hamiltonians $\text{H}_{q,r}$  \eqref{hierarchy} of the infinite $q$-boson hierarchy on $\mathcal{D}$ \eqref{domain} extend uniquely to independent unbounded
self-adjoint operators in the Fock space $\mathcal{H}$ \eqref{HFock} with commuting resolvents $(\text{H}_{q,r}-z)^{-1}$ (with  $z\in\mathbb{C}\setminus\mathbb{R}$ and $r\in\mathbb{N}$).
\end{theorem}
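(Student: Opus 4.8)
The plan is to construct each self-adjoint extension out of the sector-wise spectral decomposition already furnished by Theorem~\ref{hamiltonians:thm}, and then to deduce uniqueness, unboundedness, commuting resolvents, and independence after simultaneously diagonalizing the entire hierarchy via the Fourier transform $\boldsymbol{F_q}$. First I would record that each $\text{H}_{q,r}$ is symmetric on $\mathcal{D}$ (as already noted) and establish essential self-adjointness through the deficiency-index criterion. The key observation is that for $z\in\mathbb{C}\setminus\mathbb{R}$ the restriction of $\text{H}_{q,r}-z$ to each $n$-particle subspace $\ell^2(\Lambda_n,\delta_n)$ is a bounded self-adjoint operator minus a non-real scalar, hence a bijection of that subspace onto itself; since $\mathcal{D}$ consists of finite sums across sectors, it follows that $(\text{H}_{q,r}-z)\mathcal{D}=\mathcal{D}$, which is dense in $\mathcal{H}$. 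Both deficiency indices therefore vanish, so $\text{H}_{q,r}$ is essentially self-adjoint and its closure $\overline{\text{H}_{q,r}}$ is the unique self-adjoint extension.

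Next I would make this extension explicit. Assembling the sector-wise isomorphisms into a single unitary $\boldsymbol{F_q}\colon\mathcal{H}\to\bigoplus_{n\geq 0}L^2(A,\Delta\,\text{d}\xi)$ and invoking Eq.~\eqref{s-d}, the closure $\overline{\text{H}_{q,r}}$ is unitarily equivalent to multiplication by the real symbol that equals $\varepsilon_r(\xi)$ from Eq.~\eqref{ev-r} on the sectors with $n\geq r$ and vanishes on the sectors with $n<r$ (recall $\text{H}_{q,r}$ annihilates $\mathcal{F}(\Lambda_n)$ when $r>n$). Multiplication by a real measurable function is self-adjoint on its maximal domain and agrees with the essentially self-adjoint operator on the common core $\boldsymbol{F_q}\mathcal{D}$; hence this multiplication operator is precisely $\boldsymbol{F_q}\,\overline{\text{H}_{q,r}}\,\boldsymbol{F_q}^{-1}$. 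Unboundedness is then immediate from $\lim_{n\to\infty}\sup_{\xi\in A}|\varepsilon_r(\xi)|=\infty$, as already observed.

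The commuting of resolvents now follows by transporting the problem through $\boldsymbol{F_q}$. Under this unitary the resolvent $(\overline{\text{H}_{q,r}}-z)^{-1}$ becomes multiplication by the bounded function $(\varepsilon_r(\xi)-z)^{-1}$ (equal to $-z^{-1}$ on the sectors with $n<r$), and multiplication operators by bounded functions commute pointwise; transporting back shows that the resolvents $(\overline{\text{H}_{q,r}}-z)^{-1}$ commute as bounded operators on $\mathcal{H}$, which is the correct strong sense of commutativity for unbounded self-adjoint operators. Independence is inherited from Corollary~\ref{integrability:cor}: on any single sector with $n\geq r$ the symbols $\varepsilon_1,\dots,\varepsilon_n$ are functionally independent, so no nontrivial relation can hold among the $\overline{\text{H}_{q,r}}$.

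The main obstacle I anticipate is not the analysis within a fixed sector, which Theorem~\ref{hamiltonians:thm} already supplies, but the passage to genuinely unbounded operators spread across infinitely many sectors: one must verify that the closure of $\text{H}_{q,r}$ coincides with the direct-sum multiplication operator (so the self-adjoint extension is truly unique) and that ``commuting'' is understood in the resolvent/strong sense rather than naively on domains, since the individual unbounded operators need not share a common invariant dense domain beyond $\mathcal{D}$. The deficiency-index argument together with the simultaneous diagonalization by $\boldsymbol{F_q}$ is designed precisely to sidestep these domain subtleties.
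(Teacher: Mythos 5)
Your proposal is correct and follows essentially the same route as the paper: the authors likewise obtain essential self-adjointness from the fact that $(\text{H}_{q,r}-z)$ maps each sector $\ell^2(\Lambda_n,\delta_n)$ onto itself (so the range on $\mathcal{D}$ is dense), deduce unboundedness from $\lim_{n\to\infty}\sup_{\xi\in A}|\varepsilon_r(\xi)|=\infty$, and obtain commuting resolvents from the simultaneous diagonalization by $\boldsymbol{F_q}$ in Eq.~\eqref{s-d}. Your additional spelling-out of the direct-sum multiplication-operator realization and of the resolvent sense of commutativity is a faithful elaboration of the same argument.
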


\begin{remark}\label{Hr-action:rem}
It is a priori clear already from the definition of $H_r$ and $H_r^*$ \eqref{integrals} in terms of hopping operators---without need to resort to the explicit formula in Theorem \ref{Hr-action:thm}---that for given $\lambda\in\Lambda_n$ and any $f\in\mathcal{F}(\Lambda_n)$ the values of
$(H_rf)(\lambda)$ and $(H_r^*f)(\lambda)$ involve only evaluations of $f$ at a {\em finite} number of points in
$\Lambda_n$. Hence, for
$f$ in the subspace $C_0(\Lambda_n)\subset\mathcal{F}(\Lambda_n)\cap \ell^2(\Lambda_n,\delta_n)$ of functions with {finite} support in $\Lambda_n$ the infinite sums comprising $H_rf$ and $H_r^*f$ contain only a finite number of nonvanishing monomial terms. The symmetry
\begin{equation}
\langle H_rf,g\rangle_n=\langle f,H_r^*g\rangle_n\qquad (\forall f,g \in C_0(\Lambda_n))
\end{equation}
then follows from the first relation in Eq. \eqref{ca-adjointness} (without invoking the spectral decomposition in Eq. \eqref{s-d}).
One can thus determine the action of $H_r$ on $C_0(\Lambda_n)$ from the action of $H_r^*$ (and vice versa) by computing the adjoint with respect to the inner product $\langle \cdot ,\cdot\rangle_n$:
\begin{align*}
\langle H_rf,g\rangle_n&=\langle f,H_r^*g\rangle_n=\sum_{\lambda\in\Lambda_n} \delta_n(\lambda) f(\lambda)\overline{(H_r^*g)(\lambda)}\\
&=\sum_{\lambda\in\Lambda_n} \delta_n(\lambda) f(\lambda)
\sum_{\substack{J\subset \{1,\ldots,n\} ,|J|=r\\ \lambda+e_J\in\Lambda_n }} V_{\lambda,J}
\overline{g(\lambda+e_J)} \\
&=\sum_{\lambda\in\Lambda_n} \delta_n(\lambda)\overline{g(\lambda)}
\sum_{\substack{J\subset \{1,\ldots,n\} ,|J|=r\\ \lambda-e_J\in\Lambda_n }} V_{\lambda,J^c}
f(\lambda-e_J),
\end{align*}
where it was used in the last step that $\delta_n(\lambda-e_J)V_{\lambda-e_J,J}=\delta_n(\lambda)V_{\lambda ,J^c}$ (for $\lambda\in\Lambda_n$ such that $\lambda-e_J\in\Lambda_n$).
Since the actions of $H_r$ and $H_r^*$ on $\mathcal{F}(\Lambda_n)$  are determined completely by their restrictions to the subspace $C_0(\Lambda_n)$ (by the opening statement of this remark), the above computation shows that both formulas in Theorem \ref{Hr-action:thm} follow from each other (so it indeed suffices in the proof of Theorem \ref{Hr-action:thm} to verify only one of these two cases directly).
\end{remark}

\section{$n$-Particle scattering}
For $q\to 0$, the Hall-Littlewood functions \eqref{HLf} , \eqref{Cf}
reduce to Schur functions \cite[Ch.~I]{mac:symmetric}; the $q$-boson system degenerates in this limit to a system of impenetrable bosons known as the {\em Phase Model} \cite{bog-ize-kit:correlation,tsi:quantum,kor:cylindric}. We end up by computing the scattering operator that compares the large-time asymptotics of the $n$-particle dynamics of the $q$-boson system with that of the phase model.

It is manifest from the explicit product formula for the orthogonality measure $\Delta (\xi )$ \eqref{orthogonality} that the restrictions of the Hamiltonians $\text{H}_{q,1},\ldots ,\text{H}_{q,n}$ \eqref{hamiltonians} to the $n$-particle subspace  $\ell ^2(\Lambda_n,\delta_n)$ fit within a much larger class of discrete integrable lattice systems on the discrete cone $\Lambda_n$ \eqref{dominant} for which the scattering behavior was analyzed in great detail by  Ruijsenaars \cite{rui:factorized}.
Upon identifying how Ruijsenaars' general results specialize to the case of the infinite $q$-boson model, the desired scattering operator follows immediately.
To this end it is convenient to pass to uniform Lebesgue measures by incorporating orthogonality densities into the wave functions via the following gauge transformation:
\begin{subequations}
\begin{equation}
\begin{split}
 \Psi_\xi(\lambda)&:=i^{n(n-1)/2} \Delta(\xi)^{1/2} \delta_n(\lambda )^{1/2} \phi_\xi (\lambda ) \\
 &=\delta_n(\lambda )^{1/2}\sum_{\sigma \in S_n}\text{sign} (\sigma )
\hat{\mathcal S}_\sigma(\xi)^{1/2} e^{i(\rho+\lambda)\cdot\xi_\sigma}
 \end{split}
\end{equation}
($\xi\in A$ \eqref{alcove}),  where $\rho:=\frac{1}{2}(n-1,n-3,n-5,\ldots ,3-n,1-n)$ and
\begin{equation}
\hat{ {\mathcal S}}_{\sigma} (\xi)
 := \prod_{\substack{1\leq j<k\leq n\\ \sigma^{-1}_j<\sigma^{-1}_k}} s(\xi_k-\xi_j)
      \prod_{\substack{1\leq j<k\leq n\\ \sigma^{-1}_j>\sigma^{-1}_k}} \overline{ s(\xi_k-\xi_j)} ,
\end{equation}
with $\sigma_j^{-1}:=(\sigma^{-1})_j$ for $j=1,\ldots ,n$ and
\begin{equation}
 s(x)^{1/2}:=\frac{1-qe^{ix}}{|1-q e^{ix}|}, \qquad\text{so}\quad s(x)=\frac{1-qe^{ix}}{1-qe^{-ix}}.
\end{equation}
\end{subequations}
The wave functions in question diagonalize the commuting self-adjoint difference operators
\begin{equation}\label{Hqr-tilde}
\tilde{\text{H}}_{q,r}:=\delta_n^{1/2}\text{H}_{q,r}\delta_n^{-1/2},\qquad r=1,\ldots,n,
\end{equation}
in $\ell^2(\Lambda_n)$, viz.
\begin{equation}
\tilde{\text{H}}_{q,r}=\boldsymbol{\tilde{F}_q}^{-1}  \circ \hat{\text{E}}_r \circ \boldsymbol{\tilde{F}_q},\qquad r=1,\ldots,n
 \end{equation}
(cf. Eq. \eqref{s-d}), where $\boldsymbol{\tilde{F}_q}:\ell^2(\Lambda_n)\to L^2(A,\text{d}\xi)$ denotes the Hilbert space isomorphism defined by
\begin{subequations}
\begin{equation}
(\boldsymbol{\tilde{F}_q}f)(\xi):= \sum_{\lambda\in\Lambda_n}f(\lambda)\overline{\Psi_\xi(\lambda)}
\qquad (f\in \ell^2(\Lambda_n))
\end{equation}
with
\begin{equation}
(\boldsymbol{\tilde{F}_q}^{-1}\hat{f})(\lambda) =
\frac{1}{(2\pi)^n}\int_A \hat{f}(\xi) \Psi_\xi(\lambda)\text{d}\xi \qquad (\hat{f}\in L^2(A,\text{d}\xi))
\end{equation}
\end{subequations}
(cf. Eqs. \eqref{ft1}, \eqref{ft2}). The action of $\tilde{\text{H}}_{q,r} $ \eqref{Hqr-tilde} on $f\in\ell^2(\Lambda_n)$ reads explicitly
\begin{equation}
\begin{split}
(\tilde{\text{H}}_{q,r} f)(\lambda) =
&\sum_{\substack{J\subset\{1,\ldots,n\} ,|J|=r\\ \lambda+e_J\in\Lambda_n  }}
V_{J,\lambda}^{1/2} V_{J^c,\lambda+e_J}^{1/2}
f(\lambda+e_J)\ +\\
&\sum_{\substack{J\subset\{1,\ldots,n\} ,|J|=r\\ \lambda-e_J\in\Lambda_n  }}
V_{J^c,\lambda}^{1/2} V_{J,\lambda-e_J}^{1/2}
f(\lambda-e_J) .
\end{split}
\end{equation}

For $1\leq r\leq n$, let $A_r$ be an open dense domain in $A$ \eqref{alcove} on which the gradient vector $\nabla \varepsilon_r$ is regular with respect to the permutation-action of $S_n$ on its components:
\begin{equation}\label{regular-domain}
 A_r:=\{  \xi \in A \mid \partial_j\varepsilon_r \neq \partial_k\varepsilon_r,\,\forall 1\leq j<k\leq n\}
\end{equation}
(with $\varepsilon_r$ taken from Eq. \eqref{ev-r}).
For any $\xi \in A_r$, there exists then a unique permutation $\sigma_\xi\in S_n$ reordering the components
of $\nabla \varepsilon_r(\xi)$ in strictly decreasing order, i.e.
$\sigma_\xi(\nabla \varepsilon_r(\xi))\in\mathbb{R}^n_>:=\{ x\in\mathbb{R}^n\mid x_1>x_2>\cdots >x_n\}$. Clearly the assignment $\xi\to\sigma_\xi$ is constant on the connected components of $A_r$ by the continuity of $\nabla \varepsilon_r(\xi)$.
Let $\hat{\mathcal{S}}_r$ now denote the following unitary operator
on $L^2(A,\text{d}\xi)$---the {\em scattering matrix}---defined via its restriction to the dense subspace of smooth test functions with compact support inside $A_r$:
\begin{equation}
(\hat{\mathcal S}_r \hat f)(\xi) = \hat{\mathcal S}_{\sigma_\xi}(\xi) \hat f(\xi)\quad (\hat f\in C_0^\infty(A_r)).
\end{equation}
The following scattering theorem---providing explicit wave- and scattering operators that compare the large-time asymptotics of the dynamics
\begin{equation}\label{dynamics}
(e^{it\tilde{\text{H}}_{q,r} }f)(\lambda)=
\frac{1}{(2\pi)^n}\int_A e^{it\varepsilon_r (\xi )}\hat{f}(\xi) \Psi_\xi (\lambda)\text{d}\xi\qquad
(\hat{f}= \boldsymbol{\tilde{F}_q} f)
\end{equation}
of the higher $q$-boson Hamiltonian $\tilde{\text{H}}_{q,r}$ \eqref{Hqr-tilde}
with that of the corresponding $q\to 0$ limiting Hamiltonian $\tilde{\text{H}}_{0,r}$:
\begin{equation}
(\tilde{\text{H}}_{0,r} f)(\lambda) =
\sum_{\substack{J\subset\{1,\ldots,n\} ,|J|=r\\ \lambda+e_J\in\Lambda_n  }} f(\lambda+e_J) +\\
\sum_{\substack{J\subset\{1,\ldots,n\} ,|J|=r\\ \lambda-e_J\in\Lambda_n  }} f(\lambda-e_J)
\end{equation}
$(f\in \ell^2(\Lambda_n))$ for the phase model of impenetrable bosons---is a very special case of \cite[Thm.~3.3]{rui:factorized}.

\begin{theorem}[Wave and scattering operators]\label{scattering:thm}
  The operator limits
\begin{subequations}
\begin{equation}
\Omega^{\pm}_r :=s-\lim_{t\to \pm \infty}  e^{i t  \tilde{\text{H}}_{q,r}}e^{-it \tilde{\text{H}}_{0,r}}
\end{equation}
converge in the strong $\ell^2(\Lambda_n)$-norm topology and the corresponding wave operators $\Omega^\pm_r$ are given by unitary operators in $\ell^2(\Lambda_n)$ of the form
\begin{equation}
\Omega_r^\pm = \boldsymbol{\tilde{F}_q}^{-1} \circ \hat{\mathcal S}_r^{\mp 1/2}  \circ \boldsymbol{\tilde{F}_0}.
\end{equation}
Consequently, the scattering operator comparing the dynamics of $\tilde{\text{H}}_{q,r}$  and $\tilde{\text{H}}_{0,r}$ is given by the unitary operator
\begin{equation}
\mathcal{S}_r:=(\Omega_r^+)^{-1} \Omega_r^- =  \boldsymbol{\tilde{F}_0}^{-1}  \circ \hat{\mathcal S}_r \circ  \boldsymbol{\tilde{F}_0}
\end{equation}
\end{subequations}
($r=1,\ldots ,n$).
\end{theorem}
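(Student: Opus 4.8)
The plan is to recognize Theorem~\ref{scattering:thm} as a direct specialization of Ruijsenaars' general scattering result \cite[Thm.~3.3]{rui:factorized}, so that the whole argument reduces to matching the data of the infinite $q$-boson system to the hypotheses of that theorem and then transcribing its conclusion. First I would record the three structural inputs that Ruijsenaars' framework requires and that are already at hand: (i) the transform $\boldsymbol{\tilde{F}_q}$ is a unitary isomorphism $\ell^2(\Lambda_n)\to L^2(A,\text{d}\xi)$ diagonalizing each $\tilde{\text{H}}_{q,r}$ into multiplication by the real-analytic dispersion function $\varepsilon_r(\xi)$ \eqref{ev-r}; (ii) the gauged wave functions $\Psi_\xi(\lambda)$ are of plane-wave type, being a signed $S_n$-sum of exponentials $e^{i(\rho+\lambda)\cdot\xi_\sigma}$ weighted by the coefficients $\hat{\mathcal{S}}_\sigma(\xi)^{1/2}$; and (iii) these coefficients factorize over pairs through the single two-body phase $s(x)=\frac{1-qe^{ix}}{1-qe^{-ix}}$, which is precisely the factorized-scattering input of \cite{rui:factorized}.

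Next I would identify the comparison (free) dynamics. The limiting Hamiltonian $\tilde{\text{H}}_{0,r}$ is obtained as $q\to 0$; since $s(x)\to 1$ in this limit one has $\hat{\mathcal{S}}_\sigma(\xi)\equiv 1$, the wave functions $\Psi_\xi$ collapse to the free (Schur-type) plane waves diagonalized by $\boldsymbol{\tilde{F}_0}$, and $\tilde{\text{H}}_{0,r}$ reduces to the nearest-neighbour hopping operator of the phase model. Thus $\tilde{\text{H}}_{0,r}$ is exactly the asymptotically free model in Ruijsenaars' sense: it is diagonalized by the same exponentials $e^{i(\rho+\lambda)\cdot\xi_\sigma}$ but with trivial scattering data, which is the comparison dynamics against which $\tilde{\text{H}}_{q,r}$ is tested.

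With these identifications in place I would invoke \cite[Thm.~3.3]{rui:factorized}. Its mechanism is a stationary-phase analysis of the propagator \eqref{dynamics}: as $t\to\pm\infty$ the oscillatory integral localizes, for each $\xi$, on the lattice directions selected by the group velocity $\nabla\varepsilon_r(\xi)$, and on the regular domain $A_r$ \eqref{regular-domain} this singles out the unique permutation $\sigma_\xi$ ordering the components of $\nabla\varepsilon_r(\xi)$ decreasingly. Consequently the interacting evolution $e^{it\tilde{\text{H}}_{q,r}}$ applied to a free wave packet $e^{-it\tilde{\text{H}}_{0,r}}f$ converges strongly, and the surviving factor is $\hat{\mathcal{S}}_{\sigma_\xi}(\xi)^{\mp 1/2}$ according to the sign of $t$; in transform language this is multiplication by $\hat{\mathcal{S}}_r^{\mp 1/2}$, whence $\Omega_r^\pm=\boldsymbol{\tilde{F}_q}^{-1}\circ\hat{\mathcal{S}}_r^{\mp 1/2}\circ\boldsymbol{\tilde{F}_0}$. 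The composition $(\Omega_r^+)^{-1}\Omega_r^-=\boldsymbol{\tilde{F}_0}^{-1}\circ\hat{\mathcal{S}}_r^{1/2}\circ\hat{\mathcal{S}}_r^{1/2}\circ\boldsymbol{\tilde{F}_0}=\boldsymbol{\tilde{F}_0}^{-1}\circ\hat{\mathcal{S}}_r\circ\boldsymbol{\tilde{F}_0}$ then yields the stated scattering operator, with unitarity (equivalently, asymptotic completeness) inherited from the three unitary factors.

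The main obstacle is not a new computation but the verification that the concrete $q$-boson data meet the precise regularity and genericity hypotheses underlying \cite[Thm.~3.3]{rui:factorized}: one must check that the two-body phase $s(x)$ has the analyticity and boundedness properties assumed there, that $\varepsilon_r$ is nondegenerate enough for the stationary-phase estimates (which is what the density of $A_r$ and the regularity of $\nabla\varepsilon_r$ encode), and---most delicately---that the $\rho$-shift, the sign $\text{sign}(\sigma)$, and the weight $\delta_n(\lambda)^{1/2}$ in $\Psi_\xi$ are normalized exactly as Ruijsenaars' conventions demand, so that the half-power $\hat{\mathcal{S}}_r^{\mp 1/2}$ and its correlation with the sign of $t$ come out correctly. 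Once this bookkeeping is settled, the theorem follows verbatim from \cite{rui:factorized}.
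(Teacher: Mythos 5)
Your proposal is correct and follows essentially the same route as the paper: the authors likewise obtain Theorem~\ref{scattering:thm} as a very special case of \cite[Thm.~3.3]{rui:factorized}, with the convergence justified by the same stationary-phase localization along $\nabla\varepsilon_r$ on the regular domain $A_r$ that you describe (the paper merely packages this as an intermediate ``classical wave packet'' $f^{clas}(t)$ to which both $f^{(0)}(t)$ and $f_\pm(t)$ are shown to be $O(|t|^{-K})$-close via the Reed--Simon non-stationary phase estimate, yielding Proposition~\ref{af:prp}). Your algebra for $(\Omega_r^+)^{-1}\Omega_r^-$ and your identification of the $q\to 0$ phase model as the free comparison dynamics match the paper exactly.
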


It is instructive to outline briefly how Ruijsenaars' proof in \cite{rui:factorized} simplifies in our particular situation.
For this purpose,
we associate to any $\hat f\in C_0^\infty(A_r)$ a \emph{($q=0$) boson wave packet} $f^{(0)}(t)$  and  \emph{$q$-boson wave packets} $f_\pm(t)$ in $\ell^2(\Lambda_n)$ of the form
\begin{equation}
  \begin{split}
 f^{(0)}(t)&:=  \boldsymbol{\tilde{F}_0}^{-1}  (e^{-it \hat{E}_r} \hat{f}),\\
 f_\pm(t)&:= \boldsymbol{\tilde{F}_q}^{-1} ( e^{-it \hat{E}_r} \hat{\mathcal S}_r^{\pm 1/2}\hat{f}) .
  \end{split}
\end{equation}
Theorem \ref{scattering:thm} is now immediate from the following proposition.
\begin{proposition}[Asymptotic equivalence] \label{af:prp}
For all $K>0$ one has that
\begin{equation}
||  f_\pm(t) -f^{(0)}(t) ||  = O(|t|^{-K}) \quad \text{as}\ t\to \pm\infty
\end{equation}
(where $\|\cdot\|$ refers to the $\ell^2$-norm in $\ell^2(\Lambda_n)$).
\end{proposition}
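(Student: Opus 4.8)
The plan is to prove the Proposition by a stationary-phase analysis of the oscillatory integrals representing $f_\pm(t)$ and $f^{(0)}(t)$, following the strategy of \cite{rui:factorized} but exploiting the explicit Bethe-ansatz form of the wave functions $\Psi_\xi$. First I would write the difference as a single oscillatory integral
\[
f_\pm(t)(\lambda)-f^{(0)}(t)(\lambda)=\frac{1}{(2\pi)^n}\int_A e^{-it\varepsilon_r(\xi)}\bigl[\hat{\mathcal S}_r^{\pm1/2}(\xi)\Psi_\xi(\lambda)-\Psi^{(0)}_\xi(\lambda)\bigr]\hat f(\xi)\,\mathrm d\xi ,
\]
where $\Psi^{(0)}_\xi$ is the $q=0$ (phase-model) wave function, obtained by setting $s(x)^{1/2}=1$ and $\delta_n=1$ in the gauge-transformed Hall-Littlewood function. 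Expanding both wave functions into their sums over $S_n$ exhibits the integrand as a finite sum, indexed by $\sigma\in S_n$, of terms with phase $\psi_\sigma(\xi)=-t\varepsilon_r(\xi)+(\rho+\lambda)\cdot\xi_\sigma$ and smooth $\xi$-amplitudes. Since $\hat f\in C_0^\infty(A_r)$, the group velocity $\nabla\varepsilon_r(\xi)$ ranges over a compact set $W\subset\mathbb R^n$ whose coordinates are \emph{uniformly} separated on $\operatorname{supp}\hat f$. The gradient of the phase is $\nabla\psi_\sigma=\sigma(\rho+\lambda)-t\nabla\varepsilon_r(\xi)$, so the critical points are governed by the group-velocity relation $\lambda\approx t\nabla\varepsilon_r(\xi)$, and it suffices to bound the $\ell^2(\Lambda_n)$-norm of the above expression.

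Next I would split $\Lambda_n$ into a non-stationary region $R_{\mathrm{non}}(t)$, where $\lambda/t$ stays a fixed positive distance from $W$, and a stationary region $R_{\mathrm{stat}}(t)$, a neighbourhood of $tW$ containing $O(|t|^{n})$ lattice points. On $R_{\mathrm{non}}(t)$ one has $|\nabla\psi_\sigma|\ge c(|t|+|\lambda|)$ for every $\sigma$, so repeated integration by parts (non-stationary phase) gives the pointwise bound $|f_\pm(t)(\lambda)-f^{(0)}(t)(\lambda)|\le C_N(1+|t|+|\lambda|)^{-N}$ for all $N$; choosing $2N>n+2K$ and summing squares over $\lambda\in R_{\mathrm{non}}(t)$ contributes $O(|t|^{-K})$ to the norm.

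The decisive region is $R_{\mathrm{stat}}(t)$. Here the coordinates of $\lambda\approx t\nabla\varepsilon_r(\xi)$ are separated by gaps growing like $|t|$, so for $|t|$ large every occupied site of $\lambda$ carries at most one particle; consequently $\delta_n(\lambda)=1$ and the gauge prefactor in $\Psi_\xi$ is trivial. Because $\rho+\lambda$ is strictly decreasing and $\nabla\varepsilon_r$ is injective on each component of $A_r$ (which I would record as the nondegeneracy $\det\partial^2\varepsilon_r\neq0$ on $\operatorname{supp}\hat f$), for each $\lambda\in R_{\mathrm{stat}}(t)$ exactly one permutation $\sigma_*$ produces a stationary point in $\operatorname{supp}\hat f$, namely the sorting permutation defining $\sigma_\xi$ on the relevant component of $A_r$ (with $t\to+\infty$ and $t\to-\infty$ selecting the two opposite chambers, whence the two signs $\pm$). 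The scattering factor $\hat{\mathcal S}_r^{\pm1/2}=\hat{\mathcal S}_{\sigma_\xi}^{\pm1/2}$ is inserted precisely so as to cancel the scattering amplitude $\hat{\mathcal S}_{\sigma_*}^{1/2}$ carried by this dominant term of $\Psi_\xi$; together with $\delta_n(\lambda)=1$ this makes the amplitude of the $\sigma_*$-term in the bracketed difference vanish \emph{identically} on $\operatorname{supp}\hat f$, so that term contributes nothing, while the remaining terms $\sigma\neq\sigma_*$ have non-stationary phase throughout $\operatorname{supp}\hat f$ and are $O(|t|^{-N})$. Summing over the $O(|t|^{n})$ points of $R_{\mathrm{stat}}(t)$ again produces $O(|t|^{-K})$, and combining the two regions proves the claim.

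The main obstacle is this exact cancellation in $R_{\mathrm{stat}}(t)$: one must verify that the inserted factor $\hat{\mathcal S}_r^{\pm1/2}$ precisely annihilates the scattering phase of the leading permutation term of the Hall-Littlewood function. This is where the explicit form of $\Psi_\xi$ is essential—the verification reduces to the unimodularity $|s|=1$ together with the reflection identity $\hat{\mathcal S}_{\sigma w_0}=\overline{\hat{\mathcal S}_\sigma}$ relating the factor of a chamber to that of its opposite—and it is the concrete manifestation in our model of the general factorized-scattering mechanism of \cite{rui:factorized}. A secondary technical point is to make the partition into $R_{\mathrm{stat}}$ and $R_{\mathrm{non}}$ and the integration-by-parts estimates uniform in $\xi$, which is guaranteed by the compactness of $\operatorname{supp}\hat f$ inside the regular domain $A_r$ and the Hessian nondegeneracy of $\varepsilon_r$ there.
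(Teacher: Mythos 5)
Your proposal is correct and follows essentially the same route as the paper: the paper inserts an intermediate ``classical wave packet'' supported on the classically allowed region $\{\rho+\lambda\approx t\nabla\varepsilon_r(\xi)\}$ and shows, via the Reed--Simon non-stationary-phase bound, that both $f^{(0)}(t)$ and $f_\pm(t)$ agree with it up to $O(|t|^{-K})$ --- which is exactly your mechanism of keeping the single dominant permutation term (where $\delta_n(\lambda)=1$ and the inserted factor $\hat{\mathcal S}_r^{\pm 1/2}$ cancels the scattering amplitude of that term) while discarding the remaining, uniformly non-stationary terms. One small caveat: the Hessian nondegeneracy $\det\partial^2\varepsilon_r\neq 0$ you invoke is neither guaranteed on $A_r$ (e.g.\ for $r=1$ it fails at $\xi_j=\pm\pi/2$) nor actually needed, since the dominant term is cancelled identically rather than evaluated by a stationary-phase expansion.
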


To infer the above proposition, one may assume without loss of generality that the compact support of smooth test function $\hat f$ is contained inside a connected component of $A_r$. We then write $\hat{\sigma}\in S_n$ for the unique ($\xi$-independent) permutation ordering the elements of  $\nabla \varepsilon_r(\xi)$ in strictly decreasing order for all $\xi$ in the support of $\hat{f}$.
Let $V_{\text{class}}\subset \mathbb{R}^n$  be an open bounded neighborhood of the compact range of classical wave-packet velocities $\text{Ran}(\nabla \varepsilon_r):=\{ \nabla \varepsilon_r(\xi)\mid \xi\in\text{Supp}(\hat f)\}$ staying away from the boundary of the chamber
$\hat{\sigma}^{-1}(\mathbb{R}^n_>)$.  The \emph{classical wave packet}, finitely supported on the following $t$-dependent region of $\Lambda_n$:
\begin{equation}
\Lambda_n^{clas}(t)
:=
\begin{cases}
  \{ \rho+\lambda\in t \hat{\sigma} (V_{clas}) \} & \text{for}\ t>0,\\
  \{ \rho+\lambda\in t \sigma_0  \hat{\sigma} (V_{clas}) \} & \text{for}\ t< 0,
\end{cases}
\end{equation}
is defined as
\begin{multline}
f_\lambda^{clas}(t)   :=
\\  \begin{cases}
    \frac{\text{sign}(\hat{\sigma})}{(2\pi)^n}\int_{A} e^{i(\rho+\lambda)\cdot \xi -it \varepsilon_r(\xi)}\hat f(\xi) \text{d}\xi
              & \text{for}\ \lambda\in \Lambda_n^{clas}(t)\ \text{and}\ t>0,\\
    \frac{\text{sign}(\hat{\sigma})}{(-2\pi)^n}\int_{ A} e^{i(\rho+\lambda)\cdot \xi_{\sigma_0} -it \varepsilon_r(\xi)}\hat f(\xi) \text{d}\xi
              & \text{for}\ \lambda\in \Lambda_n^{clas}(t)\ \text{and}\ t<0,\\
    0          & \text{otherwise.}
  \end{cases}
\end{multline}
Here $\sigma_0$ refers to the order reversing permutation for which $\sigma_j=n+1-j$, $j=1,\ldots ,n$

With the aid of the following stationary phase estimate from \cite[p. 38-39]{ree-sim:methods}: for any $K>0$ there exists a constant $C_K>0$ such that
\begin{equation}
\Big| \int_{\mathbf A} e^{ix\cdot \xi -it\varepsilon _r(\xi)}\hat f(\xi)\, \text{d}\xi\, \Big| \leq \frac{C_K}{(1+|x|+|t|)^K}
\end{equation}
for all $x\in \mathbb{R}^n$ and $t\in \mathbb{R}$ such that $x\not\in t V_{clas}$, it is now not difficult to deduce that
\begin{equation}\label{e:asym10}
  \begin{split}
  ||  f^{(0)}(t) - f^{clas}(t)  ||  &= O(|t|^{-K}) \quad \text{as}\ t\to \pm\infty ,\\
  ||  f_\pm(t) - f^{clas}(t)  ||  &= O(|t|^{-K}) \quad \text{as}\ t\to \pm\infty ,
  \end{split}
 \end{equation}
whence the asymptotic equivalence in Proposition \ref{af:prp} follows.

\section*{Acknowledgments} It is a pleasure to thank Edwin Langmann and Christian Korff
for helpful conversations and comments.

\bibliographystyle{amsplain}

\end{document}